\begin{document}
\newtheorem{theorem}{Theorem}
\newtheorem{proof}{Proof}
\newtheorem{acknowledgement}{Acknowledgement}

\title{An estimation of the logarithmic timescale in an ergodic dynamics}
\author{
        Ignacio S. Gomez \\
                Instituto de F\'{i}sica La Plata, Universidad Nacional de La Plata,\\
        CONICET, Facultad de Ciencias Exactas\\
        Calle 115 y 49, 1900 La Plata, \underline{Argentina}\\
        stratoignacio@hotmail.com
}

\maketitle
\begin{abstract}
An estimation of the logarithmic timescale in quantum systems having an ergodic dynamics in the semiclassical limit of quasiclassical large parameters, is presented. The estimation is based on the existence of finite generators for ergodic measure--preserving transformations having a finite Kolmogorov--Sinai (KS) entropy and on the time rescaling property of the KS--entropy. The results are in agreement with the obtained in the literature but with a simpler mathematics and within the context of the ergodic theory.
\end{abstract}

\section{Introduction}

It is well known that the manifestation of the dynamical aspects of quantum chaos is possible only within its characteristic timescales, the Heisenberg time in the regular case and the logarithmic timescale in the chaotic case, where typical phenomena with a semiclassical description are possible such as relaxation, exponential sensitivity, etc \cite{Ber89,Cas95,Gut90,Haa01,Sto99}.
The logarithmic timescale determines the time interval where the wave--packet motion is as random as the classical trajectory by spreading over the whole phase space \cite{Hel89}. It should be noted that some authors consider the logarithmic timescale as a satisfactory resolution of the apparent contradiction between the Correspondence Principle and the quantum to classical transition in chaotic phenomena \cite{Cas95}.
Concerning the chaotic dynamics in quantum systems, the Kolmogorov--Sinai entropy (KS--entropy) \cite{Lic92,Tab79,Wal82} has proven to be one of the most used indicators. The main reason is that the behavior of chaotic systems of continuous spectrum can be modeled from discretized
models such that the KS--entropies of the continuous systems and of the discrete ones coincide for a
certain time range. Taking into account the graininess of the quantum phase space due the Indetermination Principle, non commutative quantum extensions of the KS--entropy can be found \cite{Ben04,Ben05,Cri93,Cri94,Fal03}. Thus, the issue of graininess is intimately related
to the quantum chaos timescales \cite{Eng97,Gomez14,Ike93,Lan07}.

To complete this picture, it should be mentioned that a relevant property in dynamical systems is the ergodicity, i.e. when the subsets of phase space have a correlation decay such that any two subsets are statistically independent ``in time average" for large times. This property, that is assumed as an hypothesis in thermodynamics and in ensembles theory \cite{Hua87,Pat72}, is at the basis of the statistical mechanics by allowing the approach to the equilibrium by means of densities that are uniformly distributed in phase space. In this sense, in previous works \cite{Cas09,Gom15} quantum extensions of the ergodic property were studied, from which characterizations of the chaotic behavior of the Casati--Prosen model \cite{Cas05} and of the phase transitions of the kicked rotator were obtained \cite{Gom14}.

The main goal of this paper is to exploit the graininess of the quantum phase space and the properties of the KS--entropy in an ergodic dynamics in order to get an estimation of the logarithmic timescale in the semiclassical limit.
The paper is organized as follows. In Section 2 we give the preliminaries and an estimation of the KS--entropy in an ergodic dynamics is presented. In Section 3 we show that this estimation can be extended for the classical analogue of a quantum system in the semiclassical limit. From this estimation and a time rescaled KS--entropy of the classical analogue, we obtain the logarithmic timescale. Section 4 is devoted to a discussion of the results and its physical relevance.
Finally, in Section 5 we draw some conclusions, and future research directions are outlined.


\section{Preliminaries}
The definitions, concepts and theorems given in this Section are extracted from the Ref. \cite{Wal82}.
\subsection{Kolmogorov Sinai entropy}
We recall the definition of the KS--entropy within the standard framework of measure theory.
Consider a dynamical system given by $(\Gamma, \Sigma, \mu, \{T_t\}_{t\in J})$, where $\Gamma$ is the phase space, $\Sigma$ is a $\sigma$-algebra, $\mu:\Sigma \rightarrow [0,1]$ is a normalized measure and $\{T_t\}_{t\in J}$ is a semigroup of measure--preserving transformations. For instance, $T_t$ could be the classical Liouville transformation or the corresponding classical transformation associated to the quantum Schr\"{o}dinger transformation. $J$ is usually $\mathbb{R}$ for continuous dynamical systems and $\mathbb{Z}$ for discrete ones.

Let us divide the phase space $\Gamma$ in a partition $Q$ of $m$ small cells $A_{i}$ of measure $\mu (A_{i})$. The entropy of $Q$ is defined as
\begin{equation}\label{entropy partition}
H(Q)=-\sum_{i=1}^{m}\mu(A_{i})\log\mu(A_{i}).
\end{equation}
Now, given two partitions $Q_1$ and $Q_2$ we can obtain the partition $Q_1\vee Q_2$ which is $\{a_i\cup b_j: a_i\in Q_1, b_j\in Q_2\}$, i.e. $Q_1\vee Q_2$ is a refinement of $Q_1$ and $Q_2$.
In particular, from $Q$ we can obtain the partition $H(\vee_{j=0}^{n}T^{-j}Q)$ being $T^{-j}$ the inverse of $T_{j}$ (i.e. $T^{-j}=T_{j}^{-1}$) and $T^{-j}Q=\{T^{-j}a:a\in Q\}$.
From this, the KS--entropy $h_{KS}$ of the dynamical system is defined as
\begin{equation}\label{KS-entropy}
h_{KS}=\sup_{Q}\{\lim_{n\rightarrow\infty}\frac{1}{n}H(\vee_{j=0}^{n}T^{-j}Q)\}
\end{equation}
where the supreme is taken over all measurable initial partitions $Q$ of $\Gamma$. In addition, the Brudno theorem expresses that the KS--entropy is the average unpredictability of information of all possible trajectories in the phase space. Furthermore, Pesin theorem relates the KS--entropy with the exponential instability of motion given by the Lyapunov exponents. Thus, from the Pesin theorem it follows that $h_{KS}>0$ is a sufficient condition for chaotic motion \cite{Lic92,Tab79}.
\subsection{Time rescaled KS--entropy}
By taking $(\Gamma, \Sigma, \mu, \{T_t\}_{t\in J})$ as the classical analogue of a quantum system and considering the timescale $\tau$ within the quantum and classical descriptions coincide, the definition \eqref{KS-entropy} can be expressed as
\begin{equation}\label{KS-entropy1}
h_{KS}=\sup_{Q}\{\lim_{n\tau\rightarrow\infty}\frac{1}{n\tau}H(\vee_{j=0}^{n\tau}T^{-j}Q)\}
\end{equation}
Now since $T^{-j\tau}=(T_{\tau})^{-j}$ one can recast \eqref{KS-entropy1} as
\begin{equation}\label{KS-entropy2}
h_{KS}=\frac{1}{\tau}\sup_{Q}\{\lim_{n\rightarrow\infty}\frac{1}{n}H(\vee_{j=0}^{n}(T_{\tau})^{-j}Q)\} \nonumber
\end{equation}
Finally, from this equation one can express $h_{KS}$ as
\begin{equation}\label{KS-entropy3}
h_{KS}=\frac{1}{\tau}h_{KS}^{(\tau)}   \ \ \ , \ \ \ h_{KS}^{(\tau)}=\sup_{Q}\{\lim_{n\rightarrow\infty}\frac{1}{n}H(\vee_{j=0}^{n}(T_{\tau})^{-j}Q)\}
\end{equation}
The main role of the time rescaled KS--entropy $h_{KS}^{(\tau)}$ is that allows to introduce the timescale $\tau$ as a parameter. This concept will be an important ingredient for obtaining the logarithmic timescale.

\subsection{Ergodicity}
In dynamical systems theory, the correlation decay of ergodic systems is one of the most important properties for the validity of the statistical description of the dynamics because different regions of phase space become statistical independent ``in time average" when they are enough separated in time. More precisely, if one has a dynamical system $(\Gamma,\mu,\Sigma,\{T_t\})$ that is ergodic then the correlations between two arbitrary sets $A,B\subseteq \Gamma$ that are sufficiently separated in time satisfy
\begin{eqnarray}\label{ergodic}
lim_{T\rightarrow\infty}\frac{1}{T}\int_{0}^{T}C(T_tA,B)dt=0 \ \ \ \ , \ \ \ \ \textrm{for all} \ A,B\subseteq\Gamma \nonumber
\end{eqnarray}
where $C(T_tA \cap B)=\mu(T_tA \cap B)-\mu(A)\mu(B)$ is the correlation between $T_tA$ and $B$ with $T_tA$ the set $A$ at time $t$. This equation expresses the so called \emph{ergodicity property} which guarantees the equality between the time average and the space average of any function along the trajectories of the dynamical system. Ergodicity property is satisfied by all the chaotic systems, like chaotic billiards, chaotic maps, including systems described by ensemble theory. Furthermore, the calculation of the KS--entropy is intimately related with the ergodicity property, as we shall see.


\subsection{Some methods for calculating the Kolmogorov--Sinai entropy}

In order to calculate the Kolmogorov--Sinai entropy $h_{KS}$ the concept of generator is of fundamental importance. A numerable partition $\widetilde{Q}=\{a_1,a_2,\ldots,a_i,\ldots\}$ of $\Gamma$ is called a \emph{generator} of $\Gamma$ for an invertible measure--preserving $T$ if
\begin{eqnarray}\label{generator}
\vee_{n=-\infty}^{\infty}T^{n}\widetilde{Q}=\Sigma
\end{eqnarray}
This equation expresses that the entire $\sigma$--algebra $\Sigma$ can be generated by means of numerable intersections of the form \\
\noindent $\ldots \cap T^{-2}a_{k_{-2}}\cap T^{-1}a_{k_{-1}} \cap a_{k_{0}} \cap T^{1}a_{k_{1}} \cap T^{2}a_{k_{2}} \cap \ldots$ where $a_{k_j}\in \widetilde{Q}$ for all $j\in\mathbb{Z}$. It can be proved that if $\widetilde{Q}$ is a generator and $H(\widetilde{Q})<\infty$ then
\begin{eqnarray}\label{generator2}
h_{KS}=\lim_{n\rightarrow\infty}\frac{1}{n}H(\vee_{j=0}^{n}T^{-j}\widetilde{Q}) \nonumber
\end{eqnarray}
which reduces the problem of taking the supreme in the formula of the $h_{KS}$ to find a generator $\widetilde{Q}$. In practice, still having found a generator, the calculation of $H(\vee_{j=0}^{n}T^{-j}\widetilde{Q})$ turns out a difficult task due to the large number of subsets of the partition $\vee_{j=0}^{n}T^{-j}\widetilde{Q}$ as soon as $n$ increases. However, a good estimation of the $h_{KS}$ can be made by means of the existence of finite generators. This is the content of the following theorem.
\begin{theorem}\label{estimation KS}(Estimation of the KS--entropy by means of finite generators)

\noindent If $(\Gamma, \Sigma, \mu, \{T_t\}_{t\in J})$ and $T=T_{1}$ is an ergodic invertible measure--preserving transformation with $h_{KS}<\infty$ then $T$ has a finite generator $\widetilde{Q}$
\begin{eqnarray}\label{generator3}
\widetilde{Q}=\{a_1,a_2,\ldots,a_n\}
\end{eqnarray}
such that
\begin{eqnarray}\label{generator4}
e^{h_{KS}}\leq n \leq e^{h_{KS}}+1
\end{eqnarray}
\end{theorem}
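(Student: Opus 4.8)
The plan is to recognize the statement as a quantitative packaging of \emph{Krieger's finite generator theorem}, whose hypotheses (ergodicity, invertibility, finite entropy) are exactly those assumed here. The two inequalities in \eqref{generator4} have genuinely different origins: the lower bound $e^{h_{KS}}\leq n$ is a \emph{necessary} constraint satisfied by every finite generator, while the upper bound $n\leq e^{h_{KS}}+1$ is obtained by \emph{constructing} a generator with as few cells as possible. I would therefore establish the two bounds separately and then combine them.

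For the lower bound I would show that the cardinality of \emph{any} finite generator is at least $e^{h_{KS}}$. Let $\widetilde{Q}=\{a_1,\ldots,a_n\}$ be such a generator; by the generator property \eqref{generator2} one has $h_{KS}=h(T,\widetilde{Q})=\lim_{m\to\infty}\frac1m H(\vee_{j=0}^{m-1}T^{-j}\widetilde{Q})$. Subadditivity of the entropy of refinements together with the invariance of $\mu$ gives $H(\vee_{j=0}^{m-1}T^{-j}\widetilde{Q})\leq m\,H(\widetilde{Q})$, whence $h(T,\widetilde{Q})\leq H(\widetilde{Q})$. The elementary maximum--entropy bound $H(\widetilde{Q})=-\sum_{i=1}^{n}\mu(a_i)\log\mu(a_i)\leq\log n$, with equality precisely for the uniform weights $\mu(a_i)=1/n$ (concavity of $t\mapsto -t\log t$), then yields $h_{KS}\leq\log n$, i.e. $e^{h_{KS}}\leq n$.

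For the upper bound I would invoke Krieger's theorem in the form taken from \cite{Wal82}: an ergodic invertible measure--preserving $T$ with $h_{KS}<\log k$ admits a generating partition of cardinality $k$. Choosing $n=\lfloor e^{h_{KS}}\rfloor+1$, the least integer with $n>e^{h_{KS}}$, one has $\log n>h_{KS}$, so Krieger's theorem produces a finite generator $\widetilde{Q}$ with exactly $n$ cells; moreover this same $n$ satisfies $n\leq e^{h_{KS}}+1$. Combining with the lower bound of the previous step gives $e^{h_{KS}}\leq n\leq e^{h_{KS}}+1$, as claimed. The only genuinely hard ingredient is Krieger's theorem itself --- the existence of a generator of essentially minimal size --- which rests on a nontrivial Rokhlin--tower coding argument; everything else is elementary. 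A minor point deserving care is the borderline case in which $e^{h_{KS}}$ is an integer, where the strict inequality $n>e^{h_{KS}}$ needed to apply Krieger is still guaranteed by the choice $n=\lfloor e^{h_{KS}}\rfloor+1=e^{h_{KS}}+1$.
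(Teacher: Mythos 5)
Your proposal is correct. Note, however, that the paper offers no proof of this statement at all: it is quoted verbatim from Walters \cite{Wal82} as Krieger's finite generator theorem, which is exactly what you identified. Your two-step argument --- the elementary bound $h_{KS}=h(T,\widetilde{Q})\leq H(\widetilde{Q})\leq\log n$ for the lower inequality, and Krieger's theorem applied with $n=\lfloor e^{h_{KS}}\rfloor+1$ for the upper one --- is precisely the standard derivation of this quantitative form from the basic version of Krieger's theorem, so you have in effect supplied the proof the paper delegates to its reference.
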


\section{Logarithmic timescale in an ergodic dynamics in the semiclassical limit}
With the help of the Theorem \ref{estimation KS} and taking into account the graininess of the quantum phase space, one can obtain a semiclassical version of the Theorem \ref{estimation KS} from which the logarithmic timescale can be deduced straightforwardly. We begin by describing the natural graininess of the quantum phase space.

\subsection{A quantum generator in the quantum phase space}
\begin{figure}[th] \label{fig1}
\begin{center}
\includegraphics[width=8cm]{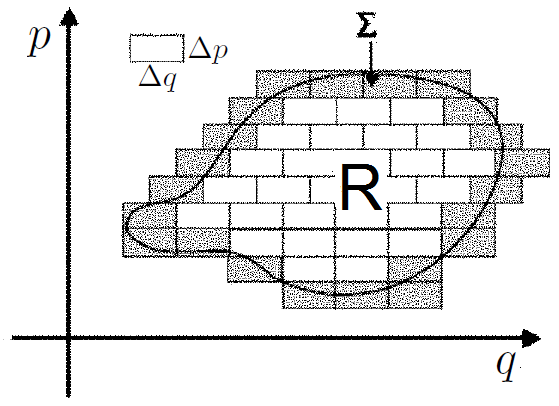}
\end{center}
\caption{Bounded motion and graininess in quantum phase space. In the semiclassical limit $q\gg1$ the region $R$ that the classical analogue occupies has a volume that is approximately the sum of the volumes of the rigid boxes $\Delta q\Delta p$ contained in $R$. The region $\Sigma$ corresponding to the rigid boxes that intersect the frontier of $R$ can be neglected in the limit $q\gg1$.}
\end{figure}
In quantum mechanics, the Uncertainty Principle leads to a granulated phase space composed by rigid boxes of minimal size $\Delta q\Delta p\geq\hbar^{D}$ where $2D$ is the dimension of the phase space. This is the so called \emph{graininess} of the quantum phase space. In a typical chaotic dynamics the motion in phase space $\Gamma$ is bounded, occupying the system a finite region $R$ of volume $\textrm{vol}(R)$. In turn, in the semiclassical limit $q=\frac{\textrm{vol}(R)}{\hbar^{D}}\gg1$ the value of $\textrm{vol}(R)$ can be approximated by the sum of all the rigid boxes $\Delta q\Delta p$ that are contained in $R$.
Let us call $c_1,c_2,\ldots,c_n$ to these boxes. In such situation the region $\Sigma$ corresponding to the rigid boxes that intersect the frontier of $R$ can be neglected. An illustration for $D=1$ is shown in Fig. 1.
Now, since there is no subset in grained phase space having a volume smaller than $\Delta q\Delta p$ it follows that
\begin{eqnarray}\label{generator6}
\widetilde{Q}=\{c_1,c_2,\ldots,c_n\}
\end{eqnarray}
is the unique generator of $R$, that we will call \emph{quantum generator}. Moreover, one has that any $\sigma$-algebra in quantum phase space can be only composed by unions of the rigid boxes $c_1,c_2,\ldots,c_n$.

\subsection{Estimation of the logarithmic timescale in the semiclassical limit}

In order to obtain a semiclassical version of Theorem \ref{estimation KS}, we consider a quantum system having a classical analogue $(\Gamma,\mu,\Sigma,\{T_t\})$ provided with a finely grained phase space $\Gamma$ in the semiclassical limit $q\gg1$, as is shown in Fig. 1.
Also, the partition $\widetilde{Q}$ of \eqref{generator6} is the quantum generator of the region $R$ occupied by the classical analogue.
Let us assume that the dynamics in phase space is ergodic\footnote{Note that the condition of invertibility is guaranteed since the equations of motion in classical mechanics are time-reversal.}. Then, we can arrive to our main contribution of this work, established by means of the following result.
\begin{theorem}\label{estimation logarithmic}(Estimation of the time rescaled KS--entropy of the classical analogue)

\noindent Assume one has a quantum system having a classical analogue $(\Gamma, \Sigma, \mu, \{T_t\}_{t\in J})$ that occupies a region $R$ of a discretized quantum phase space of dimension $2D$. If $T=T_{\tau}$ is an ergodic and invertible measure--preserving transformation with $h_{KS}^{(\tau)}<\infty$ the time rescaled KS--entropy of the classical analogue, then in the semiclassical limit $q\gg1$ one has
\begin{eqnarray}\label{generator7}
e^{h_{KS}^{(\tau)}}\leq n \leq e^{h_{KS}^{(\tau)}}+1
\end{eqnarray}
where $n=\frac{\textrm{vol}(R)}{\hbar^{D}}$ is the quasiclassical parameter $q$ and $\widetilde{Q}=\{c_1,c_2,\ldots,c_n\}$ is the quantum generator.
\end{theorem}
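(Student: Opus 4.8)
The plan is to transplant Theorem~\ref{estimation KS} from the time-$1$ map to the time-$\tau$ map, and then to identify the finite generator it produces with the quantum generator \eqref{generator6} furnished by the graininess of phase space. First I would observe that the rescaling argument leading to \eqref{KS-entropy3} shows that $h_{KS}^{(\tau)}$ is precisely the Kolmogorov--Sinai entropy obtained when the single map $T_{\tau}$ is treated as the generating transformation with unit time step, since $T^{-j\tau}=(T_{\tau})^{-j}$. Consequently, applying Theorem~\ref{estimation KS} verbatim but with $T=T_{\tau}$ playing the role of the ergodic invertible measure-preserving transformation, and with $h_{KS}^{(\tau)}<\infty$ playing the role of the finite entropy, yields a finite generator $\widetilde{Q}=\{a_1,\ldots,a_{n'}\}$ whose cardinality obeys $e^{h_{KS}^{(\tau)}}\le n'\le e^{h_{KS}^{(\tau)}}+1$. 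The hypotheses are met because $T_{\tau}$ is assumed ergodic and invertible (the latter guaranteed by time-reversibility of the classical equations of motion) and preserves $\mu$.

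The next step is to show that this abstract finite generator must in fact be the quantum generator. Here I would invoke the discussion of Section~3.1: in the finely grained phase space no measurable subset of $R$ has volume below $\Delta q\,\Delta p$, so every element of the $\sigma$-algebra is a union of the rigid boxes $c_1,\ldots,c_n$, and the partition $\widetilde{Q}=\{c_1,\ldots,c_n\}$ is the \emph{unique} generator of $R$, with cardinality $n=\mathrm{vol}(R)/\hbar^{D}=q$ in the semiclassical limit $q\gg1$ where the boundary region $\Sigma$ is negligible. Since the quantum generator is finite (it has $q$ cells) and trivially has finite entropy $H(\widetilde{Q})<\infty$, it qualifies as one of the finite generators whose existence Theorem~\ref{estimation KS} asserts; by uniqueness it must coincide with the generator produced in the previous step, forcing $n'=n=q$.

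Combining the two steps, the cardinality bound $e^{h_{KS}^{(\tau)}}\le n'\le e^{h_{KS}^{(\tau)}}+1$ transfers directly to the quantum count $n=q$, which is precisely \eqref{generator7}. I expect the main obstacle to lie in bridging the purely existential statement of Theorem~\ref{estimation KS}, which only guarantees \emph{some} finite generator realizing the bound, with the rigidity of the quantum phase space, which fixes the cardinality of \emph{the} generator to $q=\mathrm{vol}(R)/\hbar^{D}$ a priori. This bridge rests entirely on the uniqueness of the quantum generator established in Section~3.1, together with the fact that $n=q$ is an integer, so that the two-sided bound selects essentially the single admissible value $\lceil e^{h_{KS}^{(\tau)}}\rceil$. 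The part demanding the most quantitative care is the neglect of the boundary cells $\Sigma$ intersecting the frontier of $R$, which must be shown to contribute negligibly to both $\mathrm{vol}(R)$ and the generator entropy as $q\to\infty$, so that the count of interior rigid boxes is genuinely well defined.
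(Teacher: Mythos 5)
Your proposal is correct and follows essentially the same route as the paper: invoke the uniqueness of the quantum generator $\widetilde{Q}=\{c_1,\ldots,c_n\}$ with $n=\mathrm{vol}(R)/\hbar^{D}=q$, and then apply Theorem~\ref{estimation KS} to the map $T_{\tau}$ (whose entropy is $h_{KS}^{(\tau)}$) so that the cardinality bound lands on $n=q$. You merely make explicit the step the paper leaves implicit --- that the existential generator of Theorem~\ref{estimation KS} must coincide with the quantum generator by uniqueness --- which is a faithful elaboration rather than a different argument.
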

\begin{proof}
It is clear that the partition $\widetilde{Q}=\{c_1,c_2,\ldots,c_n\}$ of eq. \eqref{generator6} is the only quantum generator in the quantum phase space and since $\textrm{vol}(R)$ is $n$ times the volume $\hbar^{D}$ of each rigid box contained in $R$ then one obtains $\textrm{vol}(R)=n\hbar^{D}$. Then,
the result follows by applying the Theorem \ref{estimation KS} to the classical analogue in the semiclassical limit.
\end{proof}
\noindent From the equation \eqref{generator7} one has
\begin{eqnarray}\label{generator9}
\tau h_{KS}\leq \log q \leq \log (e^{\tau h_{KS}}+1) \nonumber
\end{eqnarray}
from which follows that
\begin{eqnarray}\label{generator10}
\tau \leq \frac{\log q}{h_{KS}} \leq  \frac{\log (e^{\tau h_{KS}}+1)}{h_{KS}} \nonumber
\end{eqnarray}
Now, assuming a chaotic motion of the classical analogue by means of the condition $h_{KS}>0$, then one can make the approximation $e^{\tau h_{KS}}+1\approx e^{\tau h_{KS}}$. Replacing this in the last inequality one has
\begin{eqnarray}\label{generator11}
\tau = \frac{\log q}{h_{KS}}  \ \ \ \ \ \textrm{with} \ \ \ \ \ q=\frac{\textrm{vol}(R)}{\hbar^{D}}
\end{eqnarray}
which is precisely the logarithmic timescale.

\section{Physical relevance}

Here we provide a discussion about the physical relevance of the results obtained at the light of the quantum chaos theory.
Several previous work based on the quantum dynamics of observable values \cite{Ber78}, quantization by means of symmetric and ordered expansions \cite{Ang03}, and the wave--packet spreadings along hyperbolic trajectories \cite{Sch12} among others, show that a unified scenario for a characterization of the quantum chaos timescales is still absent.
Furthermore, the mathematical structure used in most of these approaches makes difficult to visualize intuitively the quantum and classical elements that are present, or even in some cases the results are restricted to special initial conditions \cite{Ang03}.
Nevertheless, we can mention the aspects of our contribution in agreement with some standard approaches used. Below we quote some results of the literature and discuss them from the point of view of the present work.

\begin{itemize}
  \item \emph{The timescale $\tau_{\hbar}$ may be one of the universal and fundamentals characteristic of quantum chaos accessible to experimental observation \cite{Ber78,Cas95}. In fact, the existence of $\tau_{\hbar}$
      \begin{eqnarray}\label{universal timescale}
      \tau_{\hbar}=C_1\log\left(\frac{C_2}{\hbar}\right)
      \end{eqnarray}
      where $C_{1,2}$ are constants has been observed and discussed in detail for some typical models of quantum chaos \cite{Ber92,Zas81}.}
      \vspace{0.1cm}

The relation \eqref{universal timescale} results as a mathematical consequence of Theorem \ref{estimation logarithmic} for any phase space of arbitrary dimension $2D$. In fact, from \eqref{generator11} one obtains $C_1=\frac{1}{h_{KS}}$, and $C_2=\textrm{vol(R)}$ the volume occupied by the system along the dynamics.
\vspace{0.1cm}

  \item \emph{Every classical structure with a phase--space
area smaller than Planck's constant $\hbar$ has no quantum correspondence. Only
the classical structures extending in phase space over scales larger than Planck's
constant are susceptible to emerge out of quantum--mechanical waves \cite{Gas14}.}
     \vspace{0.1cm}

From Theorem \ref{estimation logarithmic} one can see that the classical structure of KS--entropy estimation (Theorem \ref{estimation KS}) emerges in terms of the quasiclassical parameter $q$ in the semiclassical limit, as is expressed in \eqref{generator7}.
\vspace{0.1cm}

  \item \emph{If strong chaos occurs in the classical limit, then for a rather short time $\tau_{\hbar}$ the wave--packet spreads over the phase volume:
 $\Delta I=\hbar\exp(\lambda \tau_{\hbar})$ where $\lambda$ is the characteristic Lyapunov exponent. Therefore, for the time--scale $\tau_{\hbar}$, one has: $\tau_{\hbar}\sim \lambda^{-1}\ln (\Delta I/\hbar)=(\ln \kappa)/\lambda$, where $\kappa$ is of the order of the number of quanta of characteristic wave packet width \cite{Cas95}.}
      \vspace{0.1cm}

In fact, from \eqref{generator7} with $D=1$ it follows that for the case of the wave--packet one has: $h_{KS}^{(\tau)}=\tau_{\hbar}\lambda$, $h_{KS}=\lambda$ (Pesin theorem), and $\kappa=q=\frac{\textrm{vol}(R)}{\hbar}$. In this way, the number of quanta of the characteristic wave packet width is equal to the number $n$ of the members of the quantum generator given by \eqref{generator6}.

\end{itemize}
\begin{figure}[th] \label{fig2}
\begin{center}
\includegraphics[width=14cm]{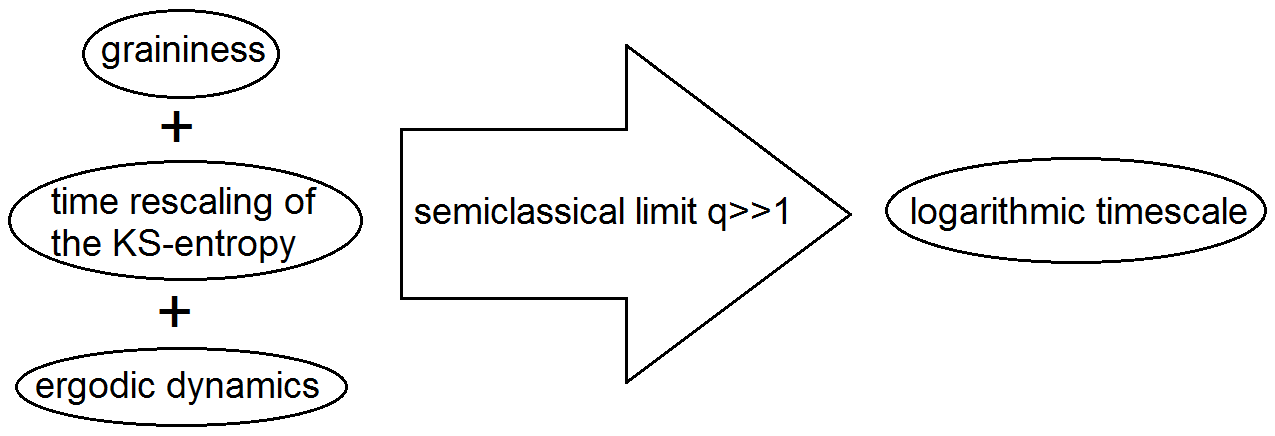}
\end{center}
\caption{An schematic picture of Theorem \ref{estimation logarithmic} showing the necessary elements for obtaining the logarithmic timescale.}
\end{figure}
A panoramic outlook of the content of Theorem \ref{estimation logarithmic} is shown in Fig. 2.

\section{Conclusions}

We have presented, in the semiclassical limit, an estimation of the logarithmic timescale for a quantum system having a classical analogue provided with an ergodic dynamics in its phase space. The three ingredients that we used were: 1) the fine granularity of the quantum phase space in the semiclassical limit, 2) the existence of an estimation of the KS--entropy in terms of finite generators of the region that the system occupies in phase space, and 3) a time rescaling of the KS--entropy that allows to introduce the characteristic timescale as a parameter.

In summary, our contribution is three--fold. On the one hand, the logarithmic timescale arises, in the semiclassical limit, as a formal result of the ergodic theory applied to a discretized quantum phase space of the classical analogue in an ergodic dynamics, thus providing a theoretical bridge between the ergodic theory and the graininess of the quantum phase space.
On the other hand, the Theorem \ref{estimation logarithmic} makes more visible and rigorous the simultaneous interaction of the effects of the quantum dynamics and the classical instability in phase space. In fact, the semiclassical parameter $q$ is expressed as the number of members of the quantum generator of the region that the classical analogue occupies in phase space.

Finally, one can consider the Theorem \ref{estimation logarithmic} as a mathematical proof of the existence of the logarithmic timescale when the dynamics of the classical analogue is chaotic, i.e. a positive KS--entropy. However, since the quasiclassical parameter $q$ and the KS--entropy are system--specific, in each example the parameters of the logarithmic timescale must be determined by the experimental observation.

One final remark. It is pertinent to point out that, in addition to Theorem \ref{estimation KS}, the techniques employed (i.e. the existence of a single quantum generator of the quantum phase space and the time rescaling property of the KS--entropy) can be used to extend semiclassically others results of the ergodic theory.

\section*{Acknowledgments}
This work was partially supported by CONICET and Universidad Nacional de La Plata, Argentina.


\begin{thebibliography}{9}

\bibitem[Angelo {\it et al.}(2003)]{Ang03} Angelo, R. M., Sanz, L. \& Furuya,
K. [2003] ``Ordered quantization and the Ehrenfest time scale,'' {\it Phys. Rev. E} {\bf 68},
016206.

\bibitem[Benatti {\it et al.}(2004)]{Ben04} Benatti, F., Capellini, V. \& Zertuche,
F. [2004] ``Quantum dynamical entropies in discrete classical chaos,'' {\it J. Phys. A} {\bf 37},
105.

\bibitem[Benatti \& Capellini(2005)]{Ben05} Benatti, F. \& Capellini, V. [2005]
 ``Continuous limit of discrete sawtooth maps and its algebraic framework,'' {\it J. Math. Phys.} {\bf 46},
062702.

\bibitem[Berman \& Zaslavsky(1978)]{Ber78} Berman, G. P. \& Zaslavsky, G. M [1978] ``Condition of stochasticity in quantum nonlinear systems,''
{\it Physica A} {\bf 91}, 450--460.

\bibitem[Berman {\it et al.}(1992)]{Ber92} Berman, G. P., Bulgakov, E. N. \& Zaslavsky, G. M [1992] ``Quantum chaos of atoms in a resonant cavity,''
{\it Chaos} {\bf 2}, 257.

\bibitem[Berry(1989)]{Ber89} Berry, M. [1989] ``Quantum Chaology, Not Quantum Chaos,'' {\it Phys. Scripta} {\bf 40}, 335-336.

\bibitem[Casati(1995)]{Cas95} Casati, G. [1995] {\it Quantum Chaos: between order and disorder}, 1st Ed. (Cambridge University Press, Cambridge).

\bibitem[Casati \& Prosen(2005)]{Cas05} Casati, G. \& Prosen, T. [2005] ``Quantum chaos and the
double-slit experiment,'' {\it Phys. Lett. A} {\bf 72}, 032111.

\bibitem[Castagnino \& Lombardi(2009)]{Cas09} Castagnino, M. \& Lombardi, O. [2009] ``Towards a definition of the
quantum ergodic hierarchy: Ergodicity and mixing,'' {\it Phys. A} {\bf 388}, 247-267.

\bibitem[Crisanti {\it et al.}(1993)]{Cri93} Crisanti, A., Falcioni, M., Mantica, G. \& Vulpiani , A. [1993] ``Transition from regular to complex behaviour in a discrete deterministic asymmetric neural network model,'' {\it J. Phys. A} {\bf 26}, 3441.

\bibitem[Crisanti {\it et al.}(1994)]{Cri94} Crisanti, A., Falcioni, M., Mantica, G. \& Vulpiani , A. [1994] ``Applying algorithmic complexity to define chaos in the motion of complex systems,'' {\it Phys. Rev. E} {\bf 50}(3), 1959-1967.

\bibitem[Engel(1997)]{Eng97} Engel, A. [1997] {\it The Collected Papers of Albert Einstein}, Vol. 6 (Princeton University Press, Princeton).

\bibitem[Falcioni {\it et al.}(2003)]{Fal03} Falcioni, M., Vulpiani, A., Mantica, G. \& Pigolotti, S. [2003] ``Coarse-grained probabilistic automata mimicking chaotic systems,'' {\it Phys. Rev. Lett.} {\bf 91}, 044101.

\bibitem[Gaspard(2014)]{Gas14} Gaspard, P. [2014] ``Signatures of classical bifurcations in the quantum
scattering resonances of dissociating molecules,'' {\it Theo. Chem. Accounts} {\bf 133}, 1519.

\bibitem[Gomez \& Castagnino(2014)]{Gom14} Gomez, I. \& Castagnino, M. [2014] ``Towards a definition of the
quantum ergodic hierarchy: Kolmogorov and Bernoulli systems,'' {\it Phys. A} {\bf 393}, 112-131.

\bibitem[Gomez \& Castagnino(2014)]{Gomez14} Gomez, I. \& Castagnino, M. [2014] ``On the classical limit of quantum
mechanics, fundamental graininess and chaos: Compatibility of chaos with the
correspondence principle,'' {\it Chaos, Sol. \& Frac.} {\bf 68}, 98-113.

\bibitem[Gomez \& Castagnino(2015)]{Gom15} Gomez, I. \& Castagnino, M. [2015] ``A Quantum Version of Spectral
Decomposition Theorem of dynamical systems, quantum chaos hierarchy:
Ergodic, mixing and exact,'' {\it Chaos, Sol. \& Frac.} {\bf 70}, 99-116.

\bibitem[Gutzwiller(1990)]{Gut90} Gutzwiller, M. C. [1990] {\it Chaos in Classical and
Quantum Mechanics}, 1st Ed. (Springer--Verlag, New York).

\bibitem[Haake(2001)]{Haa01} Haake, F. [2001] {\it Quantum Signatures of Chaos}, 2nd Ed. (Springer--Verlag, Heidelberg).

\bibitem[Heller(1989)]{Hel89} Heller, E. J. [1989] ``Wavepacket dynamics and quantum chaology,'' {\it Session LII-Chaos and quantum physics Les Houches}, France.

\bibitem[Huang(1987)]{Hua87} Huang, K. [1987] {\it Statistical Mechanics}, (John Wiley \& Sons, New York).

\bibitem[Ikeda(1993)]{Ike93} Ikeda, K. [1993] ``Quantum and Chaos: How Incompatible?,'' {\it Proceeding
of the 5th Yukawa International Seminar.} {\bf 116}.

\bibitem[Landsman(2007)]{Lan07} Landsman, N. [2007]
``Between classical and quantum,'' {\it Philosophy Part A} eds.~Butterfield, J. \& Earman, J. (Elsevier, Amsterdam), pp.~457--553.

\bibitem[Lichtenberg \& Lieberman(1992)]{Lic92} Lichtenberg, A. J. \& Lieberman, M. A. [1992] {\it Regular and Chaotic Dynamics}, (Springer--Verlag, New York).

\bibitem[Pathria(1972)]{Pat72} Pathria, R. K. [1972] {\it Statistical Mechanics}, (Butterworth--Heinemann, Oxford).

\bibitem[Schubert {\it et al.}(2012)]{Sch12} Schubert, R., Vallejos, R. O. \& Toscano, F. [2012] ``How do wave packets spread? Time evolution on Ehrenfest time scales,'' {\it J. Phys. A} {\bf 45}, 21.

\bibitem[Stockmann(1999)]{Sto99} Stockmann, H. [1999] {\it Quantum Chaos: An Introduction}, 1st Ed. (Cambridge University Press, Cambridge).

\bibitem[Tabor(1979)]{Tab79} Tabor, M. [1979] {\it Chaos and integrability in nonlinear dynamics}, (Wiley \& Sons, New York).

\bibitem[Walters(1982)]{Wal82} Walters, P. [1982] {\it An introduction to ergodic theory}, (Springer--Verlag, New York).

\bibitem[Zaslavsky(1981)]{Zas81} Zaslavsky, G. M. [1981] ``Stochasticity in quantum systems,'' {\it Phys. Rep.} {\bf 80}, 157--250.

\end{thebibliography}
\end{document}